\newcommand{\setof}[1]{\{#1\}}
\newcommand{\setfunct}[1]{#1\ensuremath{_{\cal S}}}
\newcommand{\plural}[1]{#1\ensuremath{_{\cal P}}}
\newcommand{\pt}[1]{\ensuremath{\left\llbracket #1 \right\rrbracket _{\cal P}}} 
\newcommand{\ctxt}[1]{{\cal{C}}[#1]} 
\def\deriv#1{\buildrel #1 \over \to}
\newcommand{\tostar}{\mathrel{\deriv{*}}}
\renewcommand{\tt}{\ttfamily}
\newcommand{\codefont}{\small\tt}
\newcommand{\code}[1]{\mbox{\codefont{#1}}}
\newcommand{\ccode}[1]{``\code{#1}''}
\newcommand{\bs}{\char92} 
\newcommand{\us}{\raise-.8ex\hbox{-}}
\newcommand{\xtilde}{\!\raise-.75ex\hbox{\char`\~}} 
\newcommand{\equprogram}[1]{%
\def\separator{0.7ex}%
\frenchspacing%
\refstepcounter{equation}%
\par\vspace\separator\hspace{-0.5em}%
$\vcenter{\codefont\noindent{#1}}$%
\raisebox{-0.0ex}{\kern-1.20em\llap{\rm (\theequation)}}
\par\vspace\separator\noindent\kern-.0em%
}
\newcommand{\listline}{\vrule width0pt depth1.5ex}
\newcommand{\COMMENT}[1]{}
\begin{document}
\pagestyle{plain} 
\sloppy

\title{Synthesizing Set Functions}

\author{
Sergio Antoy\inst{1}
\kern1em
Michael Hanus\inst{2}
\kern1em
Finn Teegen\inst{2}
}
\institute{
Computer Science Dept., Portland State University, Oregon, U.S.A.\\
\email{antoy@cs.pdx.edu}\\[1ex]
\and
Institut f\"ur Informatik, CAU Kiel, D-24098 Kiel, Germany. \\
\email{\{mh,fte\}@informatik.uni-kiel.de}
}

\maketitle

\begin{abstract}
  Set functions are a feature of functional logic programming
  to encapsulate all results of a non-deterministic computation
  in a single data structure.
  Given a function $f$ of a functional logic program written in Curry,
  we describe a technique to synthesize the definition of the set
  function of $f$.  The definition produced by our technique is based
  on standard Curry constructs.
  Our approach is interesting for three reasons. It allows reasoning
  about set functions, it offers an implementation of set functions
  which can be added to any Curry system,
  and it has the potential of changing our thinking
  about the implementation of non-determinism,
  a notoriously difficult problem.
\end{abstract}

\section{Introduction}
\label{sec:introduction}

Functional logic languages, such as Curry and TOY,
combine the most important features of functional and logic languages.
In particular, the combination of lazy evaluation and
non-determinism leads to better evaluation strategies
compared to logic programming \cite{AntoyEchahedHanus00JACM}
and new design patterns \cite{AntoyHanus02FLOPS}.
However, the combination of these features poses new challenges.
In particular, the encapsulation of non-strict non-deterministic
computations has not a universally accepted solution so that
different Curry systems offer different implementations for it.
Encapsulating non-deterministic computations is an important
feature for application programming
when the task is to show whether some problem
has a solution or to compare different solutions in order to
compute the best one.
For this purpose, let $S(e)$ denote the set of all the values
of an expression $e$.
The problem with such an encapsulation operator is the fact
that the $e$ might share subexpressions which are defined outside $S(e)$.
For instance, consider the expression
\equprogram{\label{ex:outsidecapsule}%
let x = 0?1 in $S(\code{x})$
}
The infix operator \ccode{?} denotes a non-deterministic choice,
i.e., the expression \ccode{0?1} has \emph{two} values:
\code{0} or \code{1}.
Since the non-determinism of \code{x} is introduced outside $S(\code{x})$,
the question arises whether this should be encapsulated.
\emph{Strong encapsulation}, which is similar to Prolog's \code{findall},
requires to encapsulate all non-determinism
occurring during the evaluation of the encapsulated expression.
In this case, expression (\ref{ex:outsidecapsule}) evaluates
to the set $\{\code{0},\code{1}\}$.
As discussed in \cite{BrasselHanusHuch04JFLP},
a disadvantage of strong encapsulation is its dependence
on the evaluation strategy.
For instance, consider the expression
\equprogram{\label{ex:outsidecapsule-tuple}%
let x = 0?1 in ($S(\code{x})$, x)
}
If the tuple is evaluated from left to right,
the first component evaluates to $\{\code{0},\code{1}\}$
but the second component non-deterministically evaluates to the
values \code{0} and \code{1} so that the expression
(\ref{ex:outsidecapsule-tuple}) evaluates to the values
\code{($\{\code{0},\code{1}\}$,0)} and
\code{($\{\code{0},\code{1}\}$,1)}.
However, in a right-to-left evaluation of the tuple,
\code{x} is evaluated first to one of the
values \code{0} and \code{1} so that, due to sharing,
the expression (\ref{ex:outsidecapsule-tuple}) evaluates to the values
\code{($\{\code{0}\}$,0)} and \code{($\{\code{1}\}$,1)}.

To avoid this dependency on the evaluation strategy,
\emph{weak encapsulation} of $S(e)$ only encapsulates the
non-determinism of $e$ but not the non-determinism originating from
expressions created outside $e$.
Thus, \emph{weak encapsulation} produces the last result of
(\ref{ex:outsidecapsule-tuple}) independent of the evaluation
strategy.
Weak encapsulation has the disadvantage that its meaning
depends on the syntactic structure of expressions.
For instance, the expressions
\ccode{let x = 0?1 in $S(\code{x})$} and
``$S(\code{let x = 0?1 in x})$''
have different values.
To avoid misunderstandings and make the syntactic structure
of encapsulation explicit,
\emph{set functions} have been proposed \cite{AntoyHanus09}.
For any function $f$, there is a set function $\setfunct{f}$
which computes the set of all the values $f$ for given argument values.
The set function encapsulates the non-determinism caused
by the definition of $f$ but not non-determinism originating from arguments.
For instance, consider the operation
\equprogram{\label{ex:double}%
double x = x + x
}
The result of \code{\setfunct{double}} is always a set
with a single element since the definition of \code{double}
does not contain any non-determinism.
Thus, \code{\setfunct{double} (0?1)}
evaluates to the two sets $\{\code{0}\}$ and $\{\code{2}\}$.

Although set functions fit well into the framework of
functional logic programming,
their implementation is challenging.
For instance, the Curry system PAKCS \cite{Hanus17PAKCS}
compiles Curry programs into Prolog programs so that
non-determinism is implemented for free.
Set functions are implemented in PAKCS by Prolog's \code{findall}.
To obtain the correct separation of non-determinism caused by arguments
and functions, as discussed above, arguments are completely
evaluated before the \code{findall} encapsulation is invoked.
Although this works in many cases, there are some situations
where this implementation does not deliver any result.
For instance, if the complete evaluation of arguments fails
or does not terminate, no result is computed even if the set function
does not demand the complete argument values.
Furthermore, if the set is infinite, \code{findall} does not terminate
even if the goal is only testing whether the set is empty.
Thus, the PAKCS implementation of set functions is ``too strict.''

These problems are avoided by the implementation of set functions
in the Curry system KiCS2,
which compiles Curry programs into Haskell programs
and represents non-deterministic values in tree-like structures
\cite{BrasselHanusPeemoellerReck11}.
A similar but slightly different representation
is used to implement set functions.
Due to the interaction of different levels of non-determinism,
the detailed implementation is quite complex
so that a simpler implementation is desirable.

In this paper, we propose a new implementation of set functions
that can be added to any Curry system.
It avoids the disadvantages of existing implementations
by synthesizing an explicit definition of a set function
for a given function.
Depending on the source code of a function,
simple or more complex definitions of a set function are derived.
For instance, nested set functions require a more complex scheme
than top-level set functions, and functions with non-linear right-hand sides
require an explicit implementation of the call-time choice semantics.

The paper is structured as follows.
In the next section, we review some aspects of functional logic programming
and Curry.
After the definition of set functions in Sect.~\ref{sec:set-functions},
we introduce in Sect.~\ref{sec:plural} plural functions as
an intermediate step towards the synthesis of set functions.
A first and simple approach to synthesize set functions
is presented in Sect.~\ref{sec:toplevelsets}
before we discuss in Sect.~\ref{sec:nonlinear}
and \ref{sec:nested} the extension to non-linear rules and
nested set functions, respectively.
Sect.~\ref{sec:related} discussed related work before
we conclude in Sect.~\ref{sec:conclusions}.

\section{Functional Logic Programming and Curry}
\label{sec:flp}

We assume familiarity with the basic concepts of
functional logic programming \cite{AntoyHanus10CACM,Hanus13}
and Curry \cite{Hanus16Curry}.
Therefore, we briefly review only those aspects that are
relevant for this paper.

Although Curry has a syntax close to Haskell \cite{PeytonJones03Haskell},
there is an important difference in the interpretation of rules
defining an operation. If there are different rules that might
be applicable to reduce an expression, all rules
are applied in a non-deterministic manner.
Hence, operations might yield more than one result on a given input.
Non-deterministic operations, which are
interpreted as mappings from values into sets of values \cite{GonzalezEtAl99},
are an important feature of contemporary functional logic languages.
The archetype of non-deterministic operations is the
choice operator \ccode{?} defined by
\begin{curry}
x ? _ = x
_ ? y = y
\end{curry}
Typically, this operator is used to define other
non-deterministic operations like
\begin{curry}
coin = 0 ? 1
\end{curry}
Thus, the expression \code{coin} evaluates to one of the values \code{0} or
\code{1}.
Non-deterministic operations are quite expressive
since they can be used to completely eliminate logic variables
in functional logic programs, as show in
\cite{AntoyHanus06ICLP,deDiosCastroLopezFraguas07}.
Therefore, we ignore them in the formal development below.
For instance, a Boolean logic variable can be replaced
by the non-deterministic \emph{generator operation} for Booleans
defined by
\equprogram{\label{ex:abool}%
aBool = False ? True
}
Passing non-deterministic operations as arguments,
as in the expression \code{double$\;$coin},
might cause a semantical ambiguity.
If the argument \code{coin} is evaluated before calling \code{double},
the expression has two values, \code{0} and \code{2}.
However, if the argument \code{coin} is passed
unevaluated to the right-hand side of \code{double}
and, thus, duplicated, the expression has three different values:
\code{0}, \code{1}, or \code{2}.
These two interpretations are called \emph{call-time choice}
and \emph{run-time choice} \cite{Hussmann92}.
Contemporary functional logic languages stick to the call-time choice,
since this leads to results which are independent of the
evaluation strategy and has the rewriting logic CRWL \cite{GonzalezEtAl99}
as a logical foundation for declarative programming
with non-strict and non-deterministic operations.
Furthermore, it can be implemented by sharing which is already
available in implementations of non-strict languages.
In this paper, we use a simple reduction relation,
equivalent to CRWL,
that we sketch without giving all details (which can be found in
\cite{LopezRodriguezSanchez07}).

A \emph{value} is an expression without defined operations.
To cover non-strict computations, expressions can also
contain the special symbol $\bot$ to represent
\emph{undefined or unevaluated values}.
A \emph{partial value} is a value that might contain occurrences of $\bot$.
A \emph{partial constructor substitution} is a substitution
that replaces variables by partial values.
A \emph{context} $\ctxt{\cdot}$ is an expression with some ``hole.''
Then expressions are reduced according to the following
reduction relation:\\[1ex]
\begin{tabular}{@{\quad}r@{~~}c@{~~}l@{~~~}l}
 $\ctxt{f~\sigma(t_1) \ldots \sigma(t_n)}$ & $\to$ & $\ctxt{\sigma(r)}$ &
where  $f~t_1 \ldots t_n ~\code{=}~ r$ is a program rule\\
 & & & and $\sigma$ a partial constructor substitution\\[0ex]
 $\ctxt{e}$ & $\to$ & $\ctxt{\bot}$ &
where $e \ne \bot$
\end{tabular}\\[1ex]
The first rule models the call-time choice: if a rule is applied,
the actual arguments of the operation must have been evaluated to
partial values. The second rule models non-strictness by allowing
the evaluation of any subexpression to an undefined value
(which is intended if the value of this subexpression is not demanded).
As usual, $\tostar$ denotes the reflexive and transitive closure
of this reduction relation.
We also write $e = v$ instead of $e \tostar v$ if $v$ is a (partial) value.

For the sake of simplicity, we assume that programs are already translated
into a simple standard form:
conditional rules are replaced by if-then-else expressions
and the left-hand sides of all operations (except for \ccode{?})
are \emph{uniform} \cite{MorenoEtAl90ALP}
i.e., either the operation is defined by a single rule where all arguments
are distinct variables, or in the left-hand sides of all rules only
the last (or any other fixed) argument
is a constructor with variables as arguments
where each constructor of a data type occurs in exactly one rule.
In particular, rules are not overlapping so that
non-determinism is represented by \code{?}-expressions.

\section{Set Functions}
\label{sec:set-functions}

Set functions have been introduced in Sect.~\ref{sec:introduction}.
Their main feature is to encapsulate only the non-determinism
caused by the definition of the corresponding function.
Similarly to non-determinism, set functions encapsulate
failures only if they are caused by the function's definition.
If \code{failed} denotes a failing computation, i.e., an
expression without a value,
the expression \code{\setfunct{double}$\;$failed} has no value
(and not the empty set as a value).
Since the meaning of failures and nested set functions
has not been discussed in \cite{AntoyHanus09},
Christiansen et al.~\cite{ChristiansenHanusReckSeidel13PPDP} propose
a rigorous denotational semantics for set functions.
In order to handle failures and choices in nested applications
of set functions, computations are attached with
``nesting levels'' so that failures caused by different
encapsulation levels can be distinguished.

In the following, we use a simpler model of set functions.
Formally, set functions return sets.
However, for simplicity, our implementation returns multisets,
instead of sets, represented as some abstract data type.
Converting multisets into sets is straightforward
for the representation we choose.
If $b$ is a type, $\setof{b}$ denotes the type of a
set of elements of type $b$.
The meaning of a set function can be defined as follows:

\begin{definition}
\label{set-function-definition}
Given a unary (for simplicity) function $f :: a \to b$,
the \emph{set function} of $f$, $\setfunct{f} :: a \to \setof{b}$,
is defined as follows:
For every partial value $t$ of type $a$,
value $u$ of type $b$, and
set $U$ of elements of type $b$,
$(f\;t) = u$ iff $(\setfunct{f}\;t) = U$ and $u \in U$.
\end{definition}
This definition accommodates the different aspects of set functions
discussed in the introduction.
If the evaluation of an expression $e$ leads to a failure or choice
but its value is not required by the set function,
it does not influence the result of the set function
since $e$ can be derived to the partial value $\bot$.

\begin{example}\label{ex:ndconst}
Consider the following function:
\equprogram{%
  ndconst x y = x ? 1
}
The value of \code{\setfunct{ndconst}$\;$2$\;$failed} is $\{\code{2,1}\}$,
and \code{\setfunct{ndconst}$\;$(2?4)$\;$failed}
has the values $\{\code{2,1}\}$ and $\{\code{4,1}\}$.
\end{example}

Given a function $f$, we want to develop a method to synthesize
the definition $\setfunct{f}$.
A difficulty is that $f$ might be composed of other functions
whose non-determinism should also be encapsulated by $\setfunct{f}$.
This can be solved by introducing plural functions,
which are described next.

\section{Plural Functions}
\label{sec:plural}

If $f :: b \to c$ and $g :: a \to b$ are functions,
their composition $(f \circ g)$ is well defined
by $(f \circ g)(x) = f(g(x))$ for all $x$ of type $a$.
However, the corresponding set functions,
$\setfunct f :: b \to \setof c$ and $\setfunct g :: a \to \setof b$,
are not composable because their types mismatch---an
output of $\setfunct g$ cannot be an input of $\setfunct f$.
To support the composition of functions that return sets,
we need functions that take sets as arguments.\footnote{%
The notion of ``plural function'' is also used
in \cite{RiescoRodriguez14} to define a ``plural'' semantics
for functional logic programs. Although the type of their plural functions
is identical to ours, their semantics is quite different.}

\begin{definition}
  \label{plural-definition}
  Let $f :: a \to b$ be a function.
  We call \emph{plural} function of $f$ any
  function $\plural f :: \setof a \to \setof b$
  with the following property:
  for all $X$ and $Y$ such that $\plural f\;X = Y$,
  (1) if $y \in Y$ then there exists some $x \in X$ such that
  $f\;x = y$ and
  (2) if $x \in X$ and $f\;x = y$, then $y \in Y$.
\end{definition}
The above definition generalizes to functions with more than one
argument.  The following display shows both the type and an example of
application of the plural function, denoted by
\ccode{\plural{++}}, of the usual list concatenation \ccode{++}:
\equprogram{%
  (\plural{++}) :: \setof{[a]} -> \setof{[a]} -> \setof{[a]} \\
  \setof {[1],[2]} \plural{++} \setof {[],[3]} = \setof{[1],[1,3],[2],[2,3]}
}
Plural functions are unique, composable, and cover all the results
of set functions (see appendix for details).
Since plural functions are an important step towards
the synthesis of set functions,
we discuss their synthesis first.
To implement plural functions in Curry,
we have to decide how to represent sets
(our implementation returns multisets) of elements.
An obvious representation are lists.
Since we will later consider non-linear rules and also nested set functions
where non-determinism of different encapsulation levels are combined,
we use search trees \cite{BrasselHanusHuch04JFLP}
to represent choices between values.
The type of a search tree parameterized over the type of elements
can be defined as follows:
\begin{curry}
data ST a = Val a | Fail | Choice (ST a) (ST a)
\end{curry}
Hence, a search tree is either an expression in head-normal form,
a failure, or a choice between search trees.
Although this definition does not enforce that the argument of
a \code{Val} constructor is in head-normal form,
this invariant will be ensured by our synthesis method for set functions,
as presented below.
For instance, the plural function of the operation
\code{aBool} (\ref{ex:abool}) can be defined as
\begin{curry}
aBoolP :: ST Bool
aBoolP = Choice (Val False) (Val True)
\end{curry}
The plural function of the logical negation \code{not} defined by
\equprogram{\label{ex:not}%
not False = True\\
not True~ = False  
}
takes a search tree as an argument so that its definition must match
all search tree constructors.
Since the matching structure is similar for all operations
performing pattern matching on an argument,
we use the following generic operation to apply an operation defined
by pattern matching to a non-deterministic argument:\footnote{%
Actually, this operation is the monadic ``bind'' operation with
flipped arguments if \code{ST} is an instance of \code{MonadPlus},
as proposed in \cite{FischerKiselyovShan11}.
Here, we prefer to provide a more direct implementation.}
\begin{curry}
applyST :: (a -> ST b) -> ST a -> ST b
applyST f (Val x)        = f x
applyST _ Fail           = Fail
applyST f (Choice x1 x2) = Choice (f `applyST` x1) (f `applyST` x2)
\end{curry}
Hence, failures remain as failures, and a choice in the argument
leads to a choice in the result of the operation, which is also called
a pull-tab step \cite{AlqaddoumiAntoyFischerReck10}.
Now the plural function of \code{not} can be defined by
(shortly we will specify a systematic translation method)
\begin{curry}[mathescape=false]
notP :: ST Bool -> ST Bool
notP = applyST $ \x -> case x of False -> Val True
                                  True  -> Val False
\end{curry} 
The synthesis of plural functions for uniform programs
is straightforward: pattern matching is implemented with \code{applyST}
and function composition in right-hand sides comes for free.
For instance, the plural function of
\begin{curry}
twiceNot x = not (not x)
\end{curry}
is
\begin{curry}
twiceNotP x = notP (notP x)
\end{curry}
So far we considered only base values in search trees.
If one wants to deal with structured data, like lists of integers,
a representation like \code{ST [Int]} is not appropriate
since non-determinism can occur in any constructor of the list,
as shown by
\begin{curry}
one23 = (1?2) : ([] ? (3:[]))
\end{curry}
The expression \code{one23} evaluates to \code{[1]}, \code{[2]},
\code{[1,3]}, and \code{[2,3]}.
If we select only the head of the list, the non-determinism in the
tail does not show up, i.e., \code{head$\;$one23} evaluates
to two values \code{1} and \code{2}.
This demands for a representation of head-normal forms
with possible search tree arguments.
The head-normal forms of non-deterministic lists are the usual list
constructors where the cons arguments are search trees:
\begin{curry}
data STList a = Nil | Cons (ST a) (ST (STList a))
\end{curry}
The plural representation of \code{one23} is
\begin{curry}
one23P :: ST (STList Int)
one23P = Val (Cons (Choice (Val 1) (Val 2))
                   (Choice (Val Nil) (Val (Cons (Val 3) (Val Nil)))))
\end{curry}
The plural function of \code{head} is synthesized as
\begin{curry}[mathescape=false]
headP :: ST (STList a) -> ST a
headP = applyST $ \xs -> case xs of Nil      -> Fail
                                     Cons x _ -> x
\end{curry} 
so that \code{headP$\;$one23P} evaluates to
\code{Choice$\;$(Val$\;$1) (Val$\;$2)}, as intended.

To provide a precise definition of this transformation,
we assume that all operations in the program are uniform
(see Sect.~\ref{sec:flp}).
The plural transformation $\pt{\cdot}$ of these kinds
of function definitions is defined as follows
(where $\plural{C}$ denotes the constructor of the non-deterministic type,
like \code{STList}, corresponding to the original constructor $C$):
\begin{eqnarray*}
\pt{f~x_1 \ldots x_n~\code{=}~e} & ~=~ &
\plural{f}~x_1 \ldots x_n~\code{=}~ \pt{e}\\[2ex]
\pt{\begin{array}{@{}r@{~~}c@{~~}l@{}}
f~x_1 \ldots x_{n-1}~(C^1~x_{11} \ldots x_{1i_1})~\code{=}~e_1\\
\vdots\\
f~x_1 \ldots x_{n-1}~(C^n~x_{n1} \ldots x_{ni_n})~\code{=}~e_n
\end{array}} & ~=~ &
\begin{array}{@{}l}
f~x_1 \ldots x_{n-1}~\code{=}~
 \code{applyST~\$~\bs{}}x \to \\
 ~~~\code{case}~x~\code{of}\\
 ~~~~~~\plural{C^1}~x_{11} \ldots x_{1i_1} \to \pt{e_1}\\
 ~~~~~~\vdots\\
 ~~~~~~\plural{C^n}~x_{n1} \ldots x_{ni_n} \to \pt{e_n}
\end{array}
\end{eqnarray*}
Note that $x_i$ and $x_{jk}$ have different types in the
original and transformed program, e.g.,
an argument of type \code{Int} is transformed
into an argument of type \code{ST$\;$Int}.
Furthermore, expressions occurring in the function bodies are
transformed according to the following rules:
\begin{eqnarray*}
\pt{x} & ~=~ & x \\
\pt{C~e_1 \ldots e_n} & ~=~ & \code{Val~(}\plural{C}~ \pt{e_1} \ldots \pt{e_n}\code{)} \\
\pt{f~e_1 \ldots e_n} & ~=~ & \plural{f}~ \pt{e_1} \ldots \pt{e_n} \\
\pt{e_1 ~\code{?}~e_2} & ~=~ & \code{Choice}~ \pt{e_1}~\pt{e_2} \\
\pt{\code{failed}} & ~=~ & \code{Fail}
\end{eqnarray*}
The presented synthesis of plural functions is simple
and yields compositionality and laziness.
Thus, they are a good basis to define set functions, as shown next.

\section{Synthesis of Set Functions: The Simple Way}
\label{sec:toplevelsets}

Plural functions take sets as arguments whereas set functions are
applied to standard expressions which might not be evaluated.
To distinguish these possibly unevaluated arguments from head-normal forms,
we add a new constructor to search trees
\begin{curry}
data ST a = Val a | Uneval a | Fail | Choice (ST a) (ST a)
\end{curry}
and extend the definition of \code{applyST} with the rule
\begin{curry}
applyST f (Uneval x) = f x
\end{curry}
Furthermore, plural functions yield non-deterministic structures
which might not be completely evaluated.
By contrast, set functions yield sets of values,
i.e., completely evaluated elements.
In order to turn a plural function into a set function,
we have to evaluate the search tree structure into the
set of their values.
For the sake of simplicity, we represent the latter as ordinary lists.
Thus, we need an operation like
\begin{curry}
stValues :: ST a -> [a]
\end{curry}
to extract all the values from a search tree. For instance, the expression
\begin{curry}
stValues (Choice (Val 1) (Choice Fail (Val 2)))
\end{curry}
should evaluate to the list \code{[1,2]}.
This demands for the evaluation of all the values in a search tree
(which might be head-normal forms with choices at argument positions)
into its complete normal form.
We define a type class\footnote{Although the current definition of
Curry \cite{Hanus16Curry} does not include type classes,
many implementations of Curry,
like PAKCS, KiCS2, or MCC, support them.}
for this purpose:
\begin{curry}
class NF a where
  nf :: a -> ST a$\listline$
\end{curry}
Each instance of this type class must define a method \code{nf}
which evaluates a given head-normal form into a search tree
where all \code{Val} arguments are completely evaluated.
Instances for base types are easily defined:
\begin{curry}
instance NF Int where
  nf x = Val x
\end{curry}
The operation \code{nf} is easily extended to arbitrary search trees:\footnote{%
The use of \code{seq} ensures that the \code{Uneval} argument is evaluated.
Thus, non-determinism and failures in arguments of set functions
are not encapsulated, as intended.}
\begin{curry}
nfST :: NF a => ST a -> ST a
nfST (Val x)        = nf x
nfST (Uneval x)     = x `seq` nf x
nfST Fail           = Fail
nfST (Choice x1 x2) = Choice (nfST x1) (nfST x2)
\end{curry}
Now we can define an operation that collects all the values in a search tree
(without \code{Uneval} constructors)
into a list by a depth-first strategy:
\begin{curry}
searchDFS :: ST a -> [a]
searchDFS (Val x)        = [x]
searchDFS Fail           = []
searchDFS (Choice x1 x2) = searchDFS x1 ++ searchDFS x2
\end{curry}
Thus, failures are ignored and choices are concatenated.
Combining these two operations yields the desired definition of
\code{stValues}:
\begin{curry}
stValues :: NF a => ST a -> [a]
stValues = searchDFS . nfST
\end{curry}
\code{NF} instances for structured types can be defined by moving
choices and failures in arguments to the root:
\begin{curry}
instance NF a => NF (STList a) where
  nf Nil         = Val Nil
  nf (Cons x xs) = case nfST x of
    Choice c1 c2 -> Choice (nf (Cons c1 xs)) (nf (Cons c2 xs))
    Fail         -> Fail
    y            -> case nfST xs of
      Choice c1 c2 -> Choice (nf (Cons y c1)) (nf (Cons y c2))
      Fail         -> Fail
      ys           -> Val (Cons y ys)
\end{curry}
For instance, the non-deterministic list value \code{[1?2]}
can be described by the \code{ST} structure
\begin{curry}
nd01 = Val (Cons (Choice (Val 0) (Val 1)) (Val Nil))
\end{curry}
so that \code{stValues$\;$nd01} moves the inner choice to the top-level
and yields the list
\begin{curry}
[Cons (Val 0) (Val Nil), Cons (Val 1) (Val Nil)]
\end{curry}
which represents the set $\{\code{[0]}, \code{[1]}\}$.

As an example for our first approach to synthesize set functions,
consider the following operation (from Curry's prelude)
which non-deterministically returns any element of a list:
\begin{curry}
anyOf :: [a] -> a
anyOf (x:xs) = x ? anyOf xs
\end{curry}
Since set functions do not encapsulate non-determinism caused by arguments,
the expression \code{\setfunct{anyOf}$\;$[0?1,2,3]} evaluates to the sets
$\{\code{0},\code{2},\code{3}\}$ and
$\{\code{1},\code{2},\code{3}\}$.

In order to synthesize the set function for \code{anyOf}
by exploiting plural functions, we have to convert ordinary types,
like \code{[Int]}, into search tree types, like \code{ST (STList Int)},
and vice versa.
For this purpose, we define two conversion operations for each type
and collect their general form in the following type class:\footnote{%
Multi-parameter type classes are not yet supported
in the Curry systems PAKCS and KiCS2.
The code presented here is more elegant,
but equivalent, to the actual implementation.}
\begin{curry}
class ConvertST a b where
  toValST   :: a -> b
  fromValST :: b -> a
\end{curry}
Instances for base and list types are easily defined:
\begin{curry}
instance ConvertST Int Int where
  toValST   = id
  fromValST = id$\listline$
instance ConvertST a b => ConvertST [a] (STList b) where
  toValST []     = Nil
  toValST (x:xs) = Cons (toST x) (toST xs)$\listline$
  fromValST Nil                     = []
  fromValST (Cons (Val x) (Val xs)) = fromValST x : fromValST xs
\end{curry}
where the operation \code{toST} is like \code{toValST} but
adds a \code{Uneval} constructor:
\begin{curry}
toST :: ConvertST a b => a -> ST b
toST = Uneval . toValST
\end{curry}
The (informal) precondition of \code{fromValST} is that its argument
is already fully evaluated, e.g., by an operation like \code{stValues}.
Therefore, we define the following operation to translate an
arbitrary search tree into the list of its Curry values:
\begin{curry}
fromST :: (ConvertST a b, NF b) => ST b -> Values a
fromST = map fromValST . stValues
\end{curry}
As already mentioned, we use lists to represent multisets of values:
\begin{curry}
type Values a = [a]
\end{curry}
However, one could also use another (abstract) data type to represent
multisets or even convert them into sets, if desired.

Now we have all parts to synthesize a set function:
convert an ordinary value into its search tree representation,
apply the plural function on it, and translate the search tree
back into the multiset (list) of the values contained in this tree.
We demonstrate this by synthesizing the set function of \code{anyOf}.

The uniform representation of \code{anyOf} performs complete
pattern matching on all constructors:
\begin{curry}
anyOf []     = failed
anyOf (x:xs) = x ? anyOf xs
\end{curry}
We easily synthesize its plural function according to the scheme
of Sect.~\ref{sec:plural}:
\begin{curry}[mathescape=false]
anyOfP :: ST (STList Int) -> ST Int
anyOfP = applyST $ \xs ->
           case xs of Nil       -> Fail
                      Cons x xs -> Choice x (anyOfP xs)
\end{curry} 
Finally, we obtain its set function by converting the argument
into the search tree and the result of the plural function
into a multiset of integers:
\begin{curry}
anyOfS :: [Int] -> Values Int
anyOfS = fromST . anyOfP . toST
\end{curry}
The behavior of our synthesized set function is identical to their
original definition, e.g., \code{anyOfS$\;$[0?1,2,3]} evaluates to the
lists \code{[0,2,3]} and \code{[1,2,3]}, i.e., non-determinism
caused by arguments is not encapsulated. This is due to the fact
that the evaluation of arguments, if they are demanded inside the set function,
are initiated by standard pattern matching so that a potential
non-deterministic evaluation leads to a non-deterministic evaluation
of the synthesized set function.

In contrast to the strict evaluation of set functions in PAKCS,
as discussed in the introduction, our synthesized set functions
evaluate their arguments lazily.
For instance, the set function of \code{ndconst} defined
in Example~\ref{ex:ndconst} is synthesized as follows:
\begin{curry}
ndconstP :: ST Int -> ST Int -> ST Int
ndconstP nx ny = Choice nx (Val 1)$\listline$
ndconstS :: Int -> Int -> Values Int
ndconstS x y = fromST (ndconstP (toST x) (toST y))
\end{curry}
Since the second argument of \code{ndconstS} is never evaluated,
the expression \code{ndconstS$\;$2$\;$failed} evaluates to \code{[2,1]}
and \code{ndconstS$\;$(2?4)$\;$(3?5)}
yields the lists \code{[2,1]} and \code{[4,1]}.
The set function implementation of PAKCS fails on the first expression
and yields four results on the second one.
Hence, our synthesized set function yields better results than PAKCS,
in the sense that it is more complete and avoids duplicated results.
Morever, specific primitive operations, like \code{findall},
are not required.

The latter property is also interesting from another point of view.
Since PAKCS uses Prolog's \code{findall}, the evaluation
strategy is fixed to a depth-first search strategy implemented by
backtracking.
Our implementation allows more flexible search strategies
by modifying the implementation of \code{stValues}.
Actually, one can generalize search trees and \code{stValues}
to a monadic structure, as done in
\cite{BrasselFischerHanusReck11,FischerKiselyovShan11},
to implement various strategies for non-deterministic programming.

A weak point of our current synthesis is the handling of failures.
For instance, the evaluation of \code{anyOfS$\;$[failed,1]}
fails (due to the evaluation of the first list element)
whereas $\code{\setfunct{anyOf}}\;\code{[failed,1]} = \{\code{1}\}$
according to Def.~\ref{set-function-definition}.
To correct this incompleteness, failures resulting from argument
evaluations must be combined with result sets.
This can be done by extending search trees and distinguishing
different sources of failures, but we omit it here since
a comprehensive solution to this issue will be presented
in Sect.~\ref{sec:nested} when nested applications of set functions
are discussed.

\section{Adding Call-Time Choice}
\label{sec:nonlinear}

We have seen in Sect.~\ref{sec:introduction} that the expression
\code{double$\;$(0?1)} should evaluate to the values \code{0}
or \code{2} due to the call-time choice semantics.
Thus, the set function of
\begin{curry}
double01 :: Int
double01 = double (0?1)
\end{curry}
should yield the multiset $\{\code{0},\code{2}\}$.
However, with the current synthesis, the corresponding
set function yields the list \code{[0,1,1,2]}
and, thus, implements the run-time choice.
The problem arises from the fact that the non-determinstic
choice in the synthesized plural function
\begin{curry}
double01P :: ST Int
double01P = doubleP (Choice (Val 0) (Val 1))
\end{curry}
is duplicated by \code{doubleP}.
In order to implement the call-time choice, the same
decision (left or right choice) for both duplicates has to be made.
Instead, the search operation \code{searchDFS} handles these
choices independently and is unaware of the duplication.

To tackle this problem, we follow the idea implemented in
KiCS2 \cite{BrasselHanusPeemoellerReck11} and extend our
search tree structure by identifiers for choices
(represented by the type \code{ID}) as follows:
\begin{curry}
data ST a = Val a | Uneval a | Fail | Choice ID (ST a) (ST a)
\end{curry}
The changes to previously introduced operations on search trees,
like \code{applyST} or \code{nfST}, are minimal and straightforward
as we only have to keep a choice's identifier in their definitions.
The most significant change occurs in the search operation.
As shown in \cite{BrasselHanusPeemoellerReck11}, the call-time choice
can be implemented by storing the decision for a choice,
when it is made for the first time, during the traversal
of the search tree and looking it up later when encountering
the same choice again.
We introduce the type
\begin{curry}
data Decision = Left | Right
\end{curry}
for decisions and use an association list\footnote{Of course,
one can replace such lists by more efficient access structures.}
as an additional
argument to the search operation to store such decisions.
The adjusted depth-first search then looks as follows:
\begin{curry}
searchDFS :: [(ID,Decision)] -> ST a -> [a]
searchDFS _ (Val x)          = [x]
searchDFS _ Fail             = []
searchDFS m (Choice i x1 x2) = case lookup i m of
  Nothing    -> searchDFS ((i,Left):m) x1 ++
                  searchDFS ((i,Right):m) x2
  Just Left  -> searchDFS m x1
  Just Right -> searchDFS m x2
\end{curry}
When extracting all the values from a search tree, we initially
pass an empty list to the search operation since no decisions
have been made at that point:
\begin{curry}
stValues :: NF a => ST a -> [a]
stValues = searchDFS [] . nfST
\end{curry}
Finally, we have to ensure that the choices
occurring in synthesized plural functions are provided
with unique identifiers.
To this end, we assume a type \code{IDSupply} that
represents an infinite set of such identifiers
along with the following operations:
\begin{curry}
initSupply              :: IDSupply
uniqueID                :: IDSupply -> ID
leftSupply, rightSupply :: IDSupply -> IDSupply
\end{curry}
The operation \code{initSupply} yields an initial
identifier set. The operation \code{uniqueID} yields
an identifier from such a set while the operations
\code{leftSupply} and \code{rightSupply} both yield
disjoint subsets without the identifier obtained by
\code{uniqueID} (see \cite{BrasselHanusPeemoellerReck11} for a
discussion about implementing these operations.).
When synthesizing plural functions,
we add an additional argument of type \code{IDSupply}
and use the aforementioned operations on it to
provide unique identifiers to every choice.
The synthesized set function has to pass the initial
identifier supply \code{initSupply} to the plural function.
In the case of \code{double01}, it looks as follows:
\begin{curry}
double01P :: IDSupply -> ST Int
double01P s = doubleP (leftSupply s)
                      (Choice (uniqueID s) (Val 0) (Val 1))$\listline$
double01S :: Values Int
double01S = fromST (doubleP initSupply)
\end{curry}
With this modified synthesis, the set function yields the expected
result \code{[0,2]}.
Note that this extended scheme is necessary only if
some operation involved in the definition of the set function
has rules with non-linear right-hand sides, i.e.,
might duplicate argument expressions.
For the sake of readability, we omit this extension in
the next section where we present another extension
necessary when set functions are nested.

\section{Synthesis of Nested Set Functions}
\label{sec:nested}

So far we considered the synthesis of set functions that
occur only at the top-level of functional computations,
i.e., which are not nested inside other set functions.
The synthesis was based on the translation of functions
involved in the definition of a set function into plural functions
and extracting all the values represented by a search tree into a list
structure.
If set functions are nested, the situation becomes more complicated
since one has to define the plural function of an inner set function.
Moreover, choices and failures produced by different set functions,
i.e., levels of encapsulations, must be distinguished
according to \cite{ChristiansenHanusReckSeidel13PPDP}.
Although nested set functions are seldom used,
a complete implementation of set functions must consider them.
Therefore, we discuss in this section how we can extend
the scheme proposed so far to accommodate nested set functions.

The original proposal of set functions \cite{AntoyHanus09}
emphasized the idea to distinguish non-determinism of arguments
from non-determinism of the function definition.
However, the influence of failing computations and the combination
of nested set functions was not specified.
These aspects are discussed in \cite{ChristiansenHanusReckSeidel13PPDP}
where a denotational semantics for functional logic programs
with weak encapsulation is proposed.
Roughly speaking, an encapsulation level is attached
to failures and choices. These levels are taken into account when
value sets are extracted from a nested non-determinism structure
to ensure that failures and choices are encapsulated
by the function they belong to and not any other.
We can model this semantics by extending the structure of search
trees as follows:
\begin{curry}
data ST a = Val a | Uneval a | Fail Int | Choice Int (ST a) (ST a)
\end{curry}
The additional argument of the constructors \code{Fail} and
\code{Choice} specifies the encapsulation level.

Consider the definition
\equprogram{\label{ex:notf}%
notf = $\setfunct{\code{not}}$ failed
}
and the expression \code{\setfunct{notf}}.
Although the right-hand side of \code{notf} fails
because the argument of \code{not} is demanded
w.r.t.\ the definition (\ref{ex:not}), the source of the
failure is inside its definition so that the failure is encapsulated
and the result of \code{\setfunct{notf}} is the empty set.
However, if we define
\equprogram{\label{ex:nots}%
nots x = $\setfunct{\code{not}}$ x
}
and evaluate \code{\setfunct{nots}$\;$failed},
the computation fails since the failure comes from outside and is
not encapsulated.
These issues are discussed in \cite{ChristiansenHanusReckSeidel13PPDP}
where it has been argued that failures outside encapsulated search
should lead to a failure instead of an empty set only if there are
no other results. For instance, the expression
\code{\setfunct{anyOf}$\;$failed} has no value (since the demanded
argument is an outside failure) whereas the value
of the expression
\equprogram{\label{ex:anyoffailed}%
\setfunct{anyOf}$\;$[failed,1]
}
is the set with the single element \code{1}.
This semantics can be implemented by comparing the levels
of failures occurring in search trees (see
\cite{ChristiansenHanusReckSeidel13PPDP} for details).

With the extension of search trees introduced above,
we are well prepared to implement this semantics in Curry itself
except for one point: outside failures always lead to a failure of
the complete evaluation if their value is needed in the encapsulated
search. Thus, the evaluation of (\ref{ex:anyoffailed})
will always fail. In order to avoid this,
we have to transform such a failure into the search tree element
\code{Fail$\;$0} (where \code{0} is the ``top'' encapsulation level,
i.e., outside any set function).
For this purpose, we modify the definitions of \code{applyST} and
\code{nfST} on arguments matching the \code{Uneval} constructor
by checking whether the evaluation
of the argument to a head-normal form fails:\footnote{%
This requires a specific primitive \code{isFail} to catch
failures, which is usually supported in Curry implementations
to handle exceptions.}
\begin{curry}
applyST f (Uneval x) = if isFail x then Fail 0 else f x$\listline$
nfST (Uneval x) = if isFail x then Fail 0 else x `seq` nf x
\end{curry}
Then one can synthesize plural and set functions similarly
to the already presented scheme.
In order to set the correct encapsulation level in \code{Fail}
and \code{Choice} constructors, every function has the current
encapsulation level as an additional argument.
Finally, one also has to synthesize plural functions of set functions
if they are used inside other set functions.
For instance, the set function of \code{not} has type
\begin{curry}
notS :: Bool -> Values Bool  
\end{curry}
but the plural function of this set function must represent
the result set again as a search tree, i.e., it has the type
\begin{curry}
notSP :: Int -> ST Bool -> ST (STList Bool)
\end{curry}
(the first argument is the encapsulation level).
To evaluate the search tree structure returned by such
plural set functions, we need an operation which
behaves similarly to \code{stValues} but returns a search tree representation
of the list of values, i.e., this operation has the type
\begin{curry}
stValuesP :: NF a => Int -> ST a -> ST (STList a)
\end{curry}
Note that this operation also takes the encapsulation level
as its first argument. For instance, failures are only encapsulated
(into an empty list) if they are on the same level, i.e.,
there is the following defining rule for \code{stValuesP}:
\begin{curry}
stValuesP e (Fail n) = if n==e then Val Nil else Fail n
\end{curry}
Choices are treated in a similar way where failures in different
alternatives are merged to their maximum level according to
the semantics of \cite{ChristiansenHanusReckSeidel13PPDP}, e.g.,
\begin{curry}
stValuesP 1 (Choice 1 (Fail 0) (Fail 1))
\end{curry}
evaluates to \code{Val Nil} (representing the empty set of values).

Now we can define \code{notSP} by evaluating the result of \code{notP}
with \code{stValuesP} where the current encapsulation level is increased:
\begin{curry}
notSP e x = stValuesP (e+1) (notP (e+1) x)
\end{curry}
The plural function of \code{notf} (\ref{ex:notf}) is straightforward
(note that the level of the generated failure is the current encapsulation
level):
\begin{curry}
notfP :: Int -> ST (STList Bool)
notfP e = notSP e (Fail e)
\end{curry}
The set function of \code{notf} is synthesized as presented before
except that we additionally provide \code{1} as the initial encapsulation
level (this is also the level encapsulated by \code{fromST}):
\begin{curry}
notfS :: Values (Values Bool)
notfS = fromST (notfP 1)
\end{curry}
As we have seen, nested set functions can be synthesized
with a scheme similar to simple set functions.
In order to correctly model the semantics of
\cite{ChristiansenHanusReckSeidel13PPDP},
an encapsulation level is added to each translated operation
which is used to generate the correct \code{Fail} and \code{Choice}
constructors.
In order integrate the synthesized set functions into
standard Curry programs, arguments passed to synthesized
set functions must be checked for failures when their values
are demanded.

The extensions presented in the previous
and this section can be combined without problems.
Concrete examples for this combination and more examples
for the synthesis techniques presented in this paper are available
on-line.\footnote{\url{https://github.com/finnteegen/synthesizing-set-functions}}
In particular, there are also examples for the synthesis
of higher-order functions, which we omitted in this paper
due to the additional complexity of synthesizing the plural
functions of higher-order arguments.



\section{Related Work}
\label{sec:related}

The problems caused by integrating encapsulated search
in functional logic programs are discussed in
\cite{BrasselHanusHuch04JFLP} where the concepts of strong
and weak encapsulation are distinguished.
Weak encapsulation fits better to declarative programming
since the results do not depend on the order of evaluation.
Set functions \cite{AntoyHanus09} makes the boundaries
between different sources of non-determinism clear.
The semantical difficulties caused by nesting set functions
are discussed in \cite{ChristiansenHanusReckSeidel13PPDP}
where a denotational semantics for set functions is presented.

The implementation of backtracking and non-determinism in functional languages
has a long tradition \cite{Wadler85}.
While earlier approaches concentrated on embedding
Prolog-like constructs in functional languages
(e.g., \cite{Hinze01,SeresSpiveyHoare99}),
the implementation of demand-driven non-determinism,
which is the core of contemporary functional logic languages
\cite{AntoyEchahedHanus00JACM},
has been less explored.
A monadic implementation of the call-time choice is developed in
\cite{FischerKiselyovShan11}
which is the basis to translate a subset of Curry
to Haskell \cite{BrasselFischerHanusReck11}.
Due to performance problems with this generic approach,
KiCS2, another compiler from Curry to Haskell, is proposed in
\cite{BrasselHanusPeemoellerReck11}.
Currently, KiCS2 is the only system implementing
encapsulated search and set functions according to
\cite{ChristiansenHanusReckSeidel13PPDP},
but the detailed implementation is complex
and, thus, difficult to maintain.
This fact partially motivated the development of the approach
described in this paper.

\section{Conclusions}
\label{sec:conclusions}

We have presented a technique to synthesize the definition
of a set function of any function defined in a Curry program.
This is useful to add set functions and encapsulated search
to any Curry system so that an explicit handling of set functions
in the run-time system is not necessary.
Thanks to our method, one can add a better (i.e., less strict)
implementation of set functions to the Prolog-based Curry
implementation PAKCS or simplify the run-time system of the
Haskell-based Curry implementation KiCS2.

A disadvantage of our approach is that it increases the size
of the transformed program
due to the addition of the synthesized code.
Considering the fact that the majority of application
code is deterministic and not involved in set functions,
the increased code size is acceptable.
Nevertheless, it is an interesting topic for future work
to evaluate this for application programs
and try to find better synthesis principles (e.g., for
specific classes of operations) which produces less additional code.

Our work has the potential of both immediate and far reaching paybacks.
We offer a set-based definition of set functions simpler and more
immediate than previous ones. We offer a notion of plural function
that is original and natural. We show interesting
relationships between the two that allow us to better understand,
reason about and compute with these concepts.
The immediate consequence is an implementation
of set functions competitive with previous proposals.

A more intriguing aspect of our work is the possibility of
replacing any non-deterministic function, $f$, in a program
with its set function, which is deterministic, by enumerating
all the results of $f$.  Thus, a (often non-deterministic) functional
logic program, would become a deterministic program.
A consequence of this change
is that the techniques for the implementation
of non-determinism, such as backtracking, bubbling and pull-tabbing,
which are the output of a tremendous intellectual effort
of the last few decades, would become unnecessary.

\appendix

\section{Properties of Plural Functions}

This section contains some interesting properties of plural functions.

\begin{lemma}
The plural function of a function is unique.
\end{lemma}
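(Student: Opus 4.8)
The plan is to show that the two conditions in Definition~\ref{plural-definition} leave no freedom in the choice of output, so that the value of a plural function on every argument is forced, and hence any two plural functions of $f$ coincide. First I would observe that, for a fixed argument set $X$, conditions (1) and (2) together pin down the output completely. Writing $Y = \plural f\,X$, condition (2) forces $Y \supseteq \{\, y \mid \exists x \in X,\ f\,x = y \,\}$, while condition (1) forces the reverse inclusion; hence necessarily
\[
Y = \{\, y \mid \exists x \in X,\ f\,x = y \,\} = \bigcup_{x \in X} \{\, y \mid f\,x = y \,\}.
\]
Thus the output set of $\plural f\,X$ is just the union, over all $x \in X$, of the sets of values of the (possibly non-deterministic) expressions $f\,x$, and this depends only on $f$ and $X$.

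With this observation, uniqueness is a short extensionality argument. I would let $g$ and $h$ be two functions, each satisfying the plural-function property for $f$, and fix an arbitrary argument $X$. Taking any $y \in g\,X$, condition (1) applied to $g$ yields some $x \in X$ with $f\,x = y$; condition (2) applied to $h$ then gives $y \in h\,X$. Hence $g\,X \subseteq h\,X$, and the symmetric argument gives $h\,X \subseteq g\,X$, so $g\,X = h\,X$. Since $X$ was arbitrary, $g$ and $h$ agree on every input and are therefore equal.

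I do not expect a genuine obstacle here: the substance of the lemma is precisely that the two inclusion conditions determine the output uniquely. The only points deserving care are being explicit about the two layers of equality involved---set equality (via mutual inclusion of elements) for each individual output $\plural f\,X$, and functional extensionality for the plural functions themselves---and about reading $f\,x = y$ as ``$y$ is a value of the possibly non-deterministic expression $f\,x$,'' so that the union over $x \in X$ correctly accounts for the non-determinism of $f$.
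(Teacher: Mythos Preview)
Your proposal is correct and essentially matches the paper's proof: both argue by mutual inclusion, using condition~(1) on one plural function to obtain a witness $x \in X$ with $f\,x = y$ and then condition~(2) on the other plural function to place $y$ in its output, followed by symmetry and extensionality. Your additional preliminary observation that the output must equal $\bigcup_{x \in X}\{\,y \mid f\,x = y\,\}$ is a nice explicit characterization the paper leaves implicit, but the core uniqueness argument is the same.
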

\begin{proof}
  Suppose that both $f_1$ and $f_2$ are plural functions of some function
  $f$.  Let $X$ be any set such that
  $f_1\;X=Y_1$ and $f_2\;X=Y_2$, for some $Y_1$ and $Y_2$.
  We show $Y_1 \subseteq Y_2$.  For any $y \in Y_1$, by
  Def.~\ref{plural-definition}, point (1), applied to $f_1$, there
  exists some $x$ in $X$ such that $f\;x=y$.
  Since $x \in X$ and $f_2$ is a plural function of $f$,
  Def.~\ref{plural-definition}, point (2), implies that $y$ is in $Y_2$.
  By symmetry, $Y_2 \subseteq Y_1$.
  Hence,  $f_1 = f_2$.
  \hfill$\qed$
\end{proof}

\begin{lemma}
  If $f$ and $g$ are composable functions,
  then $\plural{(f \circ g)} = \plural f \circ \plural g$.
\end{lemma}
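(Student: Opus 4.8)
The plan is to exploit the uniqueness of plural functions established in the previous lemma. It suffices to show that the composite $\plural f \circ \plural g$ is \emph{a} plural function of $f \circ g$; since the plural function of a function is unique, $\plural f \circ \plural g$ must then coincide with $\plural{(f \circ g)}$, which is exactly the claim. So the whole argument reduces to verifying the two conditions of Definition~\ref{plural-definition} for the composite, and the uniqueness lemma does the rest.

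First I would fix an arbitrary set $X$ of (partial) values and name the intermediate and final results: let $Y = \plural g\;X$ and $Z = \plural f\;Y$, so that $(\plural f \circ \plural g)\;X = Z$. The two conditions for $f \circ g$, $X$ and $Z$ then follow by chaining the corresponding conditions for $\plural g$ and $\plural f$. For condition (1), given $z \in Z$, condition (1) for $\plural f$ applied to $Y$ yields some $y \in Y$ with $f\;y = z$, and condition (1) for $\plural g$ applied to $X$ yields some $x \in X$ with $g\;x = y$; combining these gives $x \in X$ with $(f\circ g)\;x = z$. For condition (2), given $x \in X$ with $(f\circ g)\;x = z$, I would factor the computation through an intermediate (partial) value $y$ with $g\;x = y$ and $f\;y = z$, then apply condition (2) for $\plural g$ to obtain $y \in Y$ and condition (2) for $\plural f$ to obtain $z \in Z$.

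The crux of the argument, and the step I expect to be the main obstacle, is the \emph{factorization} of a computation of the composition: that $(f\circ g)\;x = z$ holds \emph{iff} there exists a partial value $y$ with $g\;x = y$ and $f\;y = z$. The ``if'' direction (used for condition (1)) is immediate, since the reduction relation is closed under contexts and so $g\;x \tostar y$ yields $f(g\;x) \tostar f\;y \tostar z$. The ``only if'' direction (used for condition (2)) is where the call-time-choice semantics is essential: because the first reduction rule fires only when the actual argument has already been reduced to a partial value, any derivation $f(g\;x) \tostar z$ can be rearranged so that the shared argument $g\;x$ is first reduced to a single partial value $y$ and $f$ is then evaluated on $y$. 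Establishing this factorization carefully for possibly non-strict and non-deterministic $f$ and $g$ is the technical heart of the proof; once it is in place, the two conditions above close the argument and uniqueness delivers the claimed equality $\plural{(f \circ g)} = \plural f \circ \plural g$.
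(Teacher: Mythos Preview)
Your argument is correct and rests on the same underlying chain of reasoning as the paper, but it is organized differently. The paper does not invoke the uniqueness lemma at all; instead it fixes an arbitrary $X$, sets $Y=\plural g\;X$ and $Z=\plural f\;Y$, and shows the two inclusions $\plural{(f\circ g)}\;X \supseteq Z$ and $\plural{(f\circ g)}\;X \subseteq Z$ directly, using both conditions of Def.~\ref{plural-definition} for $\plural f$, $\plural g$ \emph{and} for $\plural{(f\circ g)}$. Your route---verify that $\plural f\circ\plural g$ satisfies Def.~\ref{plural-definition} for $f\circ g$ and then appeal to uniqueness---avoids ever unfolding the definition of $\plural{(f\circ g)}$ and is arguably cleaner, since it reuses the previous lemma instead of reproving half of it inline. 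Either way the same two chaining steps are needed, and in particular both proofs hinge on the factorization property you isolate: that $(f\circ g)\;x = z$ iff there is a partial value $y$ with $g\;x = y$ and $f\;y = z$. The paper leaves this step entirely implicit (the ``similar'' direction of its proof needs exactly the ``only if'' half you flag), so your explicit discussion of why call-time choice makes the factorization go through is an improvement in rigor over the paper's version rather than a detour.
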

\begin{proof}
  First, we prove that for any $X$, $\plural{(f \circ g)}\;X
  \supseteq (\plural f \circ \plural g)\;X$.
  Suppose $\plural f\;(\plural g\;X) = Z$ for some sets $X$ and $Z$.
  There exists a set $Y$ such that $\plural g\;X = Y$ and $\plural
  f\;Y = Z$.  If $z$ is some element of $Z$, then there exists some
  $y$ in $Y$ such that $z$ is a value of $g\;y$, and there exists some
  $x$ in $X$ such that $y$ is a value $f\;x$. Consequently $z$ is a value
  of $(f \circ g)(x)$ and $z$ is an element of $\plural{(f \circ g)}\;X$.
  The proof that for any $X$, $\plural{(f \circ g)}\;X \subseteq
  (\plural f \circ \plural g)\;X$ is similar.
  \hfill$\qed$
\end{proof}
The following claim establishes key relationships between the set and the
plural functions of a function.
\begin{theorem}
  \label{theorem-plural}
  For any function $f$, argument $x$ of $f$, and argument $X$ of $\plural f$:
  \begin{enumerate}
  \item {}
    $\setfunct f\;x = \plural f\;\setof x$ and
  \item {}
    $\plural f\;X = \uplus\; \setfunct f\;x, \;\; \forall x \in X$.
  \end{enumerate}
\end{theorem}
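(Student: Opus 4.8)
The plan is to reduce both statements to explicit set-builder descriptions of the two operators, each read off immediately from the definitions. From Def.~\ref{set-function-definition} the biconditional $(f\;x)=u \Leftrightarrow (\setfunct f\;x)=U \wedge u\in U$ says that $\setfunct f\;x$ is exactly the set $\setof{u \mid f\;x = u}$. From the two points of Def.~\ref{plural-definition}: point~(2) gives $\setof{y \mid \exists x\in X.\ f\;x=y}\subseteq \plural f\;X$, while point~(1) gives the reverse inclusion, so $\plural f\;X$ is exactly $\setof{y \mid \exists x\in X.\ f\;x=y}$ (and by the uniqueness lemma the plural function referred to in the statement is well defined). Once these two descriptions are available, both parts are a matter of set-theoretic bookkeeping.

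I would prove part~(2) first and obtain part~(1) as the special case $X=\setof x$. For part~(2), write $Y=\plural f\;X$ and argue the two inclusions. For $Y \subseteq \biguplus_{x\in X}\setfunct f\;x$: given $y\in Y$, point~(1) of Def.~\ref{plural-definition} produces some $x\in X$ with $f\;x=y$, whence $y\in\setfunct f\;x$ by the set-function description, so $y$ lies in the indexed union. For the reverse inclusion: if $y\in\setfunct f\;x$ for some $x\in X$, then $f\;x=y$, and point~(2) places $y$ in $Y$. Combining the inclusions yields the identity. Part~(1) then follows by instantiating $X=\setof x$, since the union indexed by the single element collapses to $\setfunct f\;x$; alternatively one repeats the same two-inclusion argument directly at the singleton, which needs no case analysis.

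The only genuine subtlety---and the step I expect to need the most care---is reconciling the multiset union $\uplus$ in part~(2) with the set-based formal development of Def.~\ref{set-function-definition} and Def.~\ref{plural-definition}. Over a set $X$ the summands $\setfunct f\;x$ for distinct $x$ may overlap, so at the level of plain sets the right-hand side is an ordinary union $\bigcup_{x\in X}\setfunct f\;x$, whereas $\uplus$ records the decomposition indexed by $X$ and is the reading that matches the multiset-returning implementation. I would therefore phrase part~(2) as an equality of collections indexed by $x\in X$, and remark that collapsing multiplicities recovers $\plural f\;X=\bigcup_{x\in X}\setfunct f\;x$; the element-level arguments above are insensitive to which reading is adopted, so no work beyond this clarification is required.
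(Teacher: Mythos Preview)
Your proof is correct and follows essentially the same two-inclusion argument as the paper: invoke point~(1) of Def.~\ref{plural-definition} for one direction, point~(2) for the other, and use Def.~\ref{set-function-definition} to pass between ``$y$ is a value of $f\;x$'' and ``$y\in\setfunct f\;x$.'' The only differences are organizational---you derive part~(1) from part~(2) by specializing $X=\setof{x}$, whereas the paper proves the two parts independently---and you add a helpful clarification about $\uplus$ versus ordinary union that the paper leaves implicit.
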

\begin{proof}
  We prove that $\setfunct f\;x \subseteq \plural f\;\setof x$.
  If $y \in \setfunct f\;x$, then, by Def.~\ref{set-function-definition},
  $y$ is a value of $f\;x$, then, by Def.~\ref{plural-definition},
  $y$ is an element of $\plural f\;\setof x$.
  The proof that $\setfunct f\;x \supseteq \plural f\;\setof x$ is
  similar. Hence condition (1) holds.
  \\
  We now prove that $\plural f\;X \subseteq \uplus\; \setfunct f\;x, \; \forall x \in X$.
  For any $X$, if $y \in \plural f\;X$, by Def.~\ref{plural-definition},
  there exists some $x \in X$ such that $y$ is a value of $f\;x$.
  By Def.~\ref{set-function-definition}, $y \in \setfunct f\;x$.
  The proof that $\plural f\;X \supseteq \uplus\; \setfunct f\;x, \; \forall x \in X$ is similar. Hence condition (2) holds.
  \hfill$\qed$
\end{proof}

\end{document}